\newtheorem{theorem}{Theorem}[section]
\newtheorem{corollary}[theorem]{Corollary}
\newtheorem{proposition}[theorem]{Proposition}
\theoremstyle{definition}
\newtheorem{example}[theorem]{Example}
\theoremstyle{remark}
\newtheorem{remark}[theorem]{Remark}
\numberwithin{equation}{section}
\newcommand{\C}{\mbox{$\Bbb C$}}
\renewcommand{\phi}{\varphi}
\newcommand{\R}{\mathbb R}
\newcommand{\G}{\mathbb G}
\newcommand{\Ss}{\mathbb S}
\begin{document}

		
		\title{Quantum computing based on complex Clifford algebras}
		\keywords{quantum computing, Clifford algebras}
		\subjclass[2020]{68Q12,15A66}
		\thanks{The authors was supported by the grant no. FSI-S-20-6187}

		\maketitle
		
		\begin{center}
			
			\author{\bf Jaroslav Hrdina, Ale\v s N\' avrat, Petr Va\v s\'ik
			}
			\vspace{6pt}
			
			\small
			
			Institute of Mathematics,
			Faculty of Mechanical Engineering, \\ Brno University of Technology,
		Czech Republic
			\\	\texttt{hrdina@fme.vutbr.cz, navrat.a@fme.vutbr.cz, vasik@fme.vutbr.cz}

		\end{center} 
		
		\thispagestyle{empty}

		\vspace{3pt}

	
	\begin{abstract}
	 We propose to represent both $n$--qubits and quantum gates acting on them as elements in the complex Clifford algebra defined on a complex vector space of dimension $2n.$  
	 In this framework, the Dirac formalism can be realized in straightforward way. 
	 	 We demonstrate its functionality by performing quantum computations with several well known examples of quantum gates. We also compare our approach with representations that use real geometric algebras. 
	\end{abstract}
		
\section{Introduction}
Real geometric (Clifford) algebras (GA) may be understood as a generalisation of well known quaternions which is an alternative for matrix description of orthogonal transformations. Real geometric algebras have a wide range of applications in robotics, \cite{joan,hdita}, image processing \cite{stev}, numerical methods \cite{hfit}, etc. Among the main advantage of this approach we count the calculation speed, straightforward and geometrically oriented implementation and effective parallelisation, \cite{dita, dita2}. We stress that all these implementations are using Clifford's geometric algebra to represent specific orthogonal transformations. 

Recently, an increasing number of papers adopt the apparat of real GA in the description, elaboration and analysis of Quantum Computing (QC) algorithms, \cite{cm, qba1}. The basic idea for this lies in identification of a state qubit with a Bloch sphere together with the identification of qubit gates with the rotations of the sphere. Namely, it is well known that a normalized qubit can be written in terms of basis vectors $\ket{0},\ket{1}$ as 
\begin{align}\label{norm_qubit}
\ket{\psi}=\cos(\theta/2)\ket{0}+(\cos\phi +i \sin\phi)\sin (\theta/2)\ket{1},
\end{align}
where $0\leq \theta \leq \pi$ and $0\leq \phi < 2\pi$ and that parameters $\theta,\phi$ can be interpreted as spherical coordinates of a point on the unit sphere  in $\R^3,$ see figure \ref{Bloch}. 
\begin{figure}[h] 
	\centering
	\includegraphics[width=5cm]{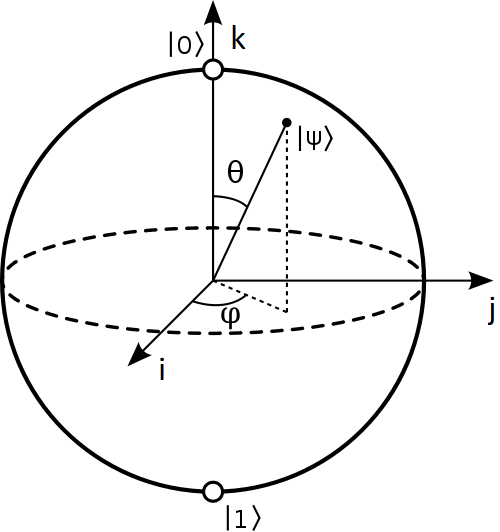}
	\caption{Bloch sphere for a qubit represented in quaternions}
	\label{Bloch}
\end{figure}
The spirit of GA description is to represent a point on the sphere by the rotation that takes a fixed initial point to that point. Under the usual choices, see figure \ref{Bloch},  the initial point is the north pole $(0,0,1)$ and  the qubit \eqref{norm_qubit} corresponds to the counterclockwise rotation by $\theta$ with respect to $y$ axis composed with the clockwise rotation by $\phi$ with respect to $z$ axis. In particular, the basis states $\ket{0},\ket{1}$ are represented by the identity and rotation $\theta=\pi$ respectively.
In the quaternionic description we have $(0,0,1)={\bf k}$ and the rotations are represented by elements $\exp(-1/2\theta  {\bf j})=\cos(\theta/2)-{\bf j}\sin(\theta/2)$ and $\exp(1/2\phi {\bf k})=\cos(\phi/2)+{\bf k}\sin(\phi/2)$ respectively. Indeed, the Bloch sphere representation of qubit \eqref{norm_qubit} is given by
\begin{align*}
e^{\tfrac12\phi {\bf k}}e^{-\tfrac12\theta {\bf j}} {\bf k} e^{\tfrac12\theta {\bf j}} e^{-\tfrac12\phi {\bf k}}=\cos\phi\sin\theta {\bf i}+\sin\phi\sin\theta {\bf j}+\cos\theta {\bf k}
\end{align*}
In this sense the qubit state \eqref{norm_qubit} is represented by quaternion $\exp(1/2\phi {\bf k})\exp(-1/2\theta  {\bf j})$.
In particular, the basis states $\ket{0},\ket{1}$ are represented by quaternions $\ket{0}=1$ and $\ket{1}=\exp(-1/2\pi  {\bf j})={-\bf j}$ respectively which in GA language is equation \eqref{basisH}, see section \ref{H_qubit} for more details. 
In principle, this representation of qubits is based on exceptional isomorphisms of low-dimensional Lie groups $SU(2)$, $Spin(3)$ and $Sp(1)$. For higher dimensions, no similar natural identification of unitary and spin groups exists. Therefore, to realise the states of multi-qubit, it is necessary to use so-called correlators which increases algebraic abstraction and lacks the original elegance. Indeed, such approach is no competition to elegant Dirac formalism.

In our paper, we present an alternative to the qubit representation in the form of complex Clifford algebras which are indeed substantially more appropriate for QC representation by means of GA because they respect the complex nature of quantum theory. In our approach, Dirac formalism may be translated to complex GA of an appropriate dimension in rather straightforward way and, in spite of abstract symbolic Dirac formalism, all expressions are represented in a particular algebra and thus may be manipulated and implemented as algebra elements directly, without any need for matrix representation.  
In the sequel, we briefly recall the definition of real GA and provide a more detailed introduction to complex GA. In Section \ref{QCinC}, we show the representation of qubits and multi-qubits in complex GA, their transformations (gates).  We provide an explicit forms of elementary 1-gates and 2-gates. We also discuss the case of multi-gates, ie. gates obtained by a tensor product. In Section \ref{realGA}, we describe a qubit by means of real geometric algebra. More precisely, we compare a description known from literature, ie. the one based on the isomorphism of unitary group $SU(2)$ and spin group $Spin(3)$, with the real description following from our complex GA approach and an isomorphism 
$\C_2\rightarrow \G_3$ of complex and real algebra.

\section{Complex Clifford algebras}	\label{CCA}
A Clifford algebra is a normed associative algebra that generalizes the complex numbers and the quaternions. Its elements may be split into Grassmann blades and the ones with grade one can be identified with the usual vectors. The geometric product of two vectors is a combination of the commutative inner product and the anti-commutative
outer product. The scalars may be real or complex. In  section \ref{QCinC} we show that the complex Clifford algebra constructed over a quadratic space of even dimension can be efficiently used to represent quantum computing but we start with the real case.

\subsection{Real Clifford algebras} 
The construction of the universal real Clifford algebra is well-known, for details see e.g. \cite{dl,lounesto}. We give only a brief description here. Let the real vector space $\R^m$ be endowed with a non-degenerate symmetric bilinear form $B$ of signature $(p,q)$, and let $(e_1,\dots, e_m)$ be an associated orthonormal basis, i.e.
\begin{align*}
B(e_i,e_j)=\begin{cases}
1 &\text{ if } i=j=1,\dots,p\\
-1 &\text{ if } i=j=p+1,\dots,m\\
0  &\text{ if } i\neq j
\end{cases}
\; \text{ where } 1\leq i,j \leq m=p+q.
\end{align*}
Let us recall that the Grassmann algebra is an associative algebra with the anti-symmetric outer product defined by the rule 
\begin{align*}
e_i \wedge e_j + e_j \wedge e_i =0 \; \text{ for } 1\leq i,j \leq m.
\end{align*}
The Grassmann blade of grade $r$ is  $e_A=e_{i_1} \wedge \cdots \wedge e_{i_r}$, where the multi-index $A$ is a set of indices ordered in the natural way $1\leq i_1\leq \cdots \leq i_r\leq m,$ and we put $e_\emptyset=1.$ Blades of orders $0\leq r \leq m$ form the basis of the graded Grassmann algebra $\Lambda(\mathbb{R}^m)$. Next, we introduce the inner product
\begin{align*}
e_i \cdot e_j= B(e_i,e_j), \;\;  1\leq i,j \leq m,
\end{align*}
leading to the so-called {\it geometric product} in the Clifford algebra
\begin{align*}
e_i e_j= e_i\cdot e_j + e_i \wedge e_j, \;\;  1\leq i,j \leq m,
\end{align*}
The respective definitions of the inner, the outer and the geometric product are then extended to blades of the grade $r$ as follows. For the inner product we put
\begin{align*}
e_j \cdot e_A = e_j \cdot (e_{i_1} \wedge \cdots \wedge e_{i_r})= \sum_{k=1}^r (-1)^k B(j,i_k) e_{A\setminus \{i_k\}},
\end{align*}
where $e_{A\setminus \{i_k\}}$ is the blade of grade $r-1$ created by deleting $e_{i_k}$ from $e_A$. This product is also called the left contraction in literature. For the outer product we have
\begin{align*}
e_j \wedge e_A = \begin{cases}
e_j \wedge e_{i_1} \wedge \cdots \wedge e_{i_r} & \text{ if } j\notin A \\
0  & \text{ if } j\in A
\end{cases}
\end{align*}
and for the geometric product we define
\begin{align*}
e_je_A=e_j\cdot e_A + e_j \wedge e_A.
\end{align*} 
Finally, these definitions are linearly extended to the whole of the vector space $\Lambda(\mathbb{R}^m)$. Thus we get an associative algebra over this vector space, the so-called real Clifford algebra, denoted by $\mathbb{G}_{p,q}=\operatorname{Cl}(p,q,\mathbb{R})$. Note that this algebra is naturally graded; the grade zero and grade one elements are identified with $\mathbb{R}$ and $\mathbb{R}^m$ respectively. The projection operator $\G_{p,q}\to\Lambda^r(\R^m)$ will be denoted by $[\;]_r$.

This grading define a $\mathbb{Z}_2$-grading of the Clifford algebra according to the parity of grades. Namely, the linear map $v\to -v$ on $\mathbb{R}^m$ extends to an automorphism $\alpha$ called the grade involution and decomposes $\mathbb{G}_{p,q}$ into positive and negative eigenspaces. The former is called the even subalgebra $\mathbb{G}^0_{p,q}$ and the latter is called the odd part $\mathbb{G}^1_{p,q}$. In addition to $\alpha$, there are two important antiautomorphisms of real Clifford algebras. The first one is $\tilde{x}$ called the reverse or transpose operation and it is defined by extension of identity on $\mathbb{R}^m$ and by the antiautomorphism property  $\widetilde{xy}=\tilde{y}\tilde{x}.$ The second antiautomorphism is called the Clifford conjugation $\bar{x}$ and  the operation is defined by composing $\alpha$ and the reverse  
\begin{align*}
\bar{x}=\alpha(\tilde{x})=\widetilde{\alpha(x)}.
\end{align*}

\subsection{The complexification} \label{ClC}
When allowing for complex coefficients, the same generators $e_A$ produce by the same formulas the complex Clifford algebra which we denote by $\mathbb{C}_m=\operatorname{Cl}(m,\mathbb{C})$. Clearly, in the complex case no signature is involved, since each basis vector $e_j$ may be multiplied by the imaginary unit $i$ to change the sign of its square. Hence we may assume we start with the real Clifford algebra $\mathbb{G}_m$ with the inner product $e_ie_j=\delta_{ij}$, $1\leq i,j \leq m$, and we construct the complex Clifford algebra as its complexification $\C_m:=\G_m\oplus i \G_m,$ i.e. any element  $\varphi\in\C_m$ can be written as $\varphi=x+iy$, where $x,y\in\G_m.$  The complex Clifford algebras for small $m$ are well known; $\C_0$ are complex numbers itself, $\C_1$ is the algebra of bicomplex numbers and 
$\C_2$ is the algebra of biquaternions. More details on complex Clifford algebras one can find in the papers \cite{bss2,bss,bu,fs}.

The construction via the complexification of $\G_m$ leads to the definition an important antiautomorphism of $\C_m,$ so-called Hermitian conjugation
\begin{align} \label{HC}
\varphi^\dagger=(x+iy)^\dagger=\bar{x}-i\bar{y},
\end{align}
where the bar notation stands for the Clifford conjugation in $\G_m.$ Note that on the zero grade part of the complex Clifford algebra $\C_0=\C$ it coincides with the usual complex conjugation. The elements satisfying $\varphi^\dagger=\varphi$ and $\varphi^\dagger=-\varphi$ will be called Hermitian and anti-Hermitian respectively. Hermitian conjugation is a very important anti-involution which is the Clifford analogue of the conjugate transpose in matrices. It leads to the definition of the Hermitian inner product on $\C_m$ given by 
\begin{align} \label{HP}
\langle\varphi|\psi\rangle=[\varphi^\dagger \psi]_0, \;\; \varphi,\psi\in\C_m
\end{align}
where we recall that $[\;]_0$ denotes the projection to the scalar part, i.e. the grade zero part. Indeed, it is easy to see that it is linear in the second slot and conjugate linear in the first slot; for each $z\in\C$ and $\varphi\in\C_m$ we have
$
(z\varphi)^\dagger=\varphi^\dagger z^\dagger=\bar{z} \varphi^\dagger
$
since the Hermitian conjugation is anti-automorphism. The Hermitian symmetry of \eqref{HP} follows from the involutivity of  the Hermitian conjugation while its positive definiteness follows from the fact that
\begin{align*}
\langle\varphi|\varphi\rangle=[\varphi^\dagger \varphi]_0=\sum_{A}\varphi_A^2,
\end{align*}
where $A$ is an arbitrary multiindex and $\varphi_A$ is the coefficient at the Grassmann blade $e_A$, i.e. $\varphi=\sum_A\varphi_A e_A$. Let us discuss the last equality in more detail.
For two multi-indices $A=\{i_1,\dots,i_r\}$, $B=\{k_1,\dots,k_s\}$ the scalar projection 
$[\tilde{e}_Ae_B]_0$ is nonzero only if the grades are equal, i.e. $r=s$. Then by the definition of geometric product and the reverse operation we get $[\tilde{e}_Ae_B]_0=(e_{i_1}\cdot e_{k_1})\cdots(e_{i_r}\cdot e_{k_r})=\delta_{i_1k_1}\cdots \delta_{i_rk_r}=\delta_{AB}$, whence by the linearity of the grade projection
\begin{align*}
[\varphi^\dagger \varphi]_0=\sum_{A,B}\bar{\varphi}_A\varphi_B[\tilde{e}_Ae_B]_0=\sum_{A}\varphi_A^2.
\end{align*}

\subsection{Witt basis} \label{Witt}
Henceforth we assume the dimension of the generating vector space is even, i.e. $m=2n.$ In such a case the complexification of the Clifford algebra  can be introduced by considering so--called complex structure, i.e. a specific orthogonal linear transformation $J:\:\R^{2n}\to\R^{2n}$ such that $J^2=-1$, where 1 stands for the identity map. Namely, we choose $J$ such that its action on the orthonormal basis $e_1,\dots,e_{2n}$ is given by $J(e_j)=-e_{j+n}$ and $J(e_{j+n})=e_j$, $j=1,\dots,n.$ With $J$ one may associate two projection operators which produce the main objects of the complex setting by acting on the orthonormal basis, so–called Witt basis elements $(f_j , f^{\dagger}_j)$. Namely, we define
\begin{align*} 
f_j&=\frac12(1+iJ)(e_j)=\frac{1}{2}(e_j- ie_{j+n}), \;\; j=1,\dots,n\\
f_j^{\dagger} &=\frac{1}{2}(1- iJ)(e_j) =\frac{1}{2}(e_j+ ie_{j+n}),  \;\; j=1,\dots,n
\end{align*}
Note that it is not confusion of the notation since $f_j^\dagger$ indeed is the image of $f_j$ under  Hermitian conjugation \eqref{HC}. 
The Witt basis elements are isotropic with respect to the geometric product, i.e. for each $j=1,\dots,n$ they satisfy $f_j^2=0$ and $f_j^{\dagger}{}^2=0$. They also satisfy the Grassmann identities
\begin{align} \label{Grassmann}
f_j f_k + f_k f_j =f_j^{\dagger} f_k^{\dagger} + f_k^{\dagger} f_j^{\dagger}=0, \;\; j,k=1,\dots,n
\end{align}
and the duality identities
\begin{align} \label{duality}
f_j f_k^{\dagger} + f_k^{\dagger} f_j = \delta_{jk}, \;\; j,k=1,\dots,n
\end{align} 
The Witt basis of the whole complex Clifford algebra $\C_{2n}$ is then obtained, similarly
to the basis of the real Clifford algebra, by taking the $2^{2n}$ possible geometric products of
Witt basis vectors, i.e. it is formed by elements
\begin{align} \label{C2n_basis}
(f_1)^{i_1}(f_1^\dagger)^{j_1}\cdots (f_n)^{i_n}(f_n^\dagger)^{j_n}, \;\; i_k,j_k\in\{0,1\} \text{ for } k=1,\dots,n
\end{align}
One can eventually use the Grassmann blades of Witt elements as the basis of $\C_{2n}$. The relation of the two basis 
can be deduced from the relation of the geometric product of Witt basis elements to the corresponding inner and outer product, for more details see \cite{Lounesto}.
\begin{align*}
f_jf_k&=f_j \cdot f_k + f_j \wedge f_k= f_j \wedge f_k \\
f_j^\dagger f_k^\dagger&=f_j^\dagger \cdot f_k^\dagger + f_j^\dagger \wedge f_k^\dagger= f_j^\dagger \wedge f_k^\dagger\\
f_j f_k^\dagger&=f_j \cdot f_k + f_j \wedge f_k^\dagger=\frac12 \delta_{jk}+ f_j \wedge f_k^\dagger
\end{align*}

\subsection{Spinor spaces}
In the language of Clifford algebras, spinor space is defined as a minimal left ideal of the complex Clifford algebra
and is realized explicitly by means of a self-adjoint primitive idempotent. The realization of spinor space within the complex Clifford algebra $\C_{2n}$ can be constructed directly using the Witt basis as follows. We start by defining
\begin{align*}
I_j=f_jf_j^\dagger \text{ and } K_j=f_j^\dagger f_j, \;\; j=1,\dots,n
\end{align*}
Direct computations show that  both $I_j, K_j$ are mutually commuting self--adjoint idempotents. More precisely, for $j,k=1,\dots,n$ the following identities hold.
\begin{align*}
I_j^\dagger&=I_j,\, I_j^2=I_j \text{ and } K_j^\dagger=K_j,\, K_j^2=K_j, \\
I_jI_k&=I_kI_j,\, K_jK_k=K_kK_j, \\
I_jK_k&=K_kI_j \text{ whenever } j\neq k, \text{ and } I_jK_j=K_jI_j=0.
\end{align*}
Moreover, the duality relations \eqref{duality} between Witt basis vectors imply that $I_j+K_j=1$ for each $j=1,\dots,n.$ Hence we get the resolution of the identity
$
1=\prod_{j=1}^n(I_j+K_j).
$
Consequently we get
\begin{align*}
\C_{2n}=\C_{2n}\prod_{j=1}^n(I_j+K_j)=\C_{2n}I_1\cdots I_n\oplus \C_{2n}I_1\cdots I_{n-1}K_1\oplus \cdots \oplus \C_{2n}K_1\cdots K_n,
\end{align*}
a direct sum decomposition of the complex Clifford algebra into $2^n$ isomorphic realizations of the spinor space that are denoted according to the specific idempotent involved:
\begin{align} \label{spinor}
\mathbb{S}_{\{i_1\cdots i_s\}\{k_1\cdots k_t\}}=\C_{2n}I_{i_1}\cdots I_{i_s}K_{k_1}\cdots K_{k_t}\subset\C_{2n},
\end{align}
where $s+t=n$ and the indices are pairwise different. 
Each such space has dimension $2^n$ and its basis is obtained by right multiplication of the basis of $\C_{2n}$ by the corresponding primitive idempotent $I_{i_1}\cdots I_{i_s}K_{k_1}\cdots K_{k_t}.$ By the basic properties of the Witt basis elements \eqref{Grassmann} and \eqref{duality} it is easy to see that this action is nonzero if and only if the element of $\C_{2n}$ actually lies in the Grassmann algebra generated by $n$-dimensional space $( f_{i_1}^\dagger,\dots,f_{i_s}^\dagger,f_{k_1},\dots,f_{k_t} )$, i.e. we may write
\begin{align} \label{spinor_G}
\mathbb{S}_{\{i_1\cdots i_s\}\{k_1\cdots k_t\}}
=\Lambda( f_{i_1}^\dagger,\dots,f_{i_s}^\dagger,f_{k_1},\dots,f_{k_t} )I_{i_1}\cdots I_{i_s}K_{k_1}\cdots K_{k_t}.
\end{align}

In terms of multiindices $A=\{i_1,\dots,i_s\}$, $B=\{k_1,\dots,k_t\}$ this spinor space can be written shortly as $\Ss_{AB}$. 
It is an easy observation that it has the structure of  a Hilbert space of dimension $2^n$ due to the Hermitian product \eqref{HP}
and that the multiplication in $\C_{2n}$ makes each spinor space $\Ss_{AB}$ into a left $\C_{2n}$-module. Hence the elements of the complex Clifford algebra that keep the Hermitian product invariant define a representation of the corresponding unitary group on the spinor space. Namely let a $\lambda\in\C_{2n}$ act on two spinors $\varphi,\psi\in\Ss_{AB}$. Then we compute $\langle\lambda\varphi|\lambda\psi\rangle=[\varphi^\dagger\lambda^\dagger\lambda \psi]_0$ by definition and due to the antiautomorphism property of the Hermitian conjugation. Hence the elements of the complex Clifford algebra such that 
\begin{align} \label{unitary}
\lambda^\dagger \lambda=1
\end{align}
holds keep the Hermitian product invariant and thus define a representation of the unitary group $U(2^n)$ on the spinor space $\Ss_{AB}.$ These elements also satisfy $\lambda\lambda^\dagger=1$ and will be called unitary elements of $\C_{2n}$ in analogy with unitary matrices. Let us remark that in the representation theory this representation of the unitary group is well known. It comes from the so called spin representation of the corresponding complex orthogonal group.

\section{Quantum computing in complex Clifford algebras} \label{QCinC}
The idea is to perform quantum computing in the Hilbert space defined by a complex Clifford algebra with Hermitian product defined by \eqref{HP} instead of the classical realization of the Hilbert space on complex coordinate space with the standard Hermitian inner product. A quantum state is then represented by an element of a complex Clifford algebra lying in spinor space \eqref{spinor} and unitary transformations are then realized as elements \eqref{unitary} of the same algebra. The computation becomes especially efficient when using Witt basis of the complex Clifford algebra, see \ref{Witt}. However the mathematical framework described in the previous section allows for a direct application to general states and transformations of multiple qubits, for clarity we start with the description of the basic case of a single qubit.

\subsection{A qubit and single qubit gates} \label{qubit_section}
A qubit  will be represented by an element in the complex Clifford algebra $\C_2$ instead of its standard representation by a vector in the complex coordinate space $\C^2$. The Witt basis elements $(f,f^\dagger)$
satisfy the Grassmann and duality identities
\begin{align*}
f^2=f^\dagger{}^2=0, \, ff^\dagger+f^\dagger f=1,
\end{align*}
leading to $f f^\dagger =\frac12+ f\wedge f^\dagger$ and $[f f^\dagger]_0=\frac12$  in particular. The Witt basis vectors induce  a basis of the complex Clifford algebra $\C_2$ of the form $(1,f,f^\dagger,ff^\dagger)$. In this algebra we have two primitive idempotents $I=ff^\dagger$ and $K=f^\dagger f$ that give rise to two isomorphic spinor spaces: $\Ss=\C_{2}I=\Lambda(f^\dagger)I$ and $\bar{\Ss}=\C_{2}K=\Lambda(f)K.$ For representing the qubit we choose the former one, see Remark \ref{rem_notation}. Choosing  the basis $(1,f^\dagger)$ of the Grassmann algebra $\Lambda(f^\dagger)$, we get the following basis of $\Ss$ that will represent  the zero state and the one state of a qubit
\begin{align} \label{01}
\begin{split}
\ket{0}&=I=f f^\dagger \\ 
\ket{1}&=f^\dagger I=f^\dagger f f^\dagger=(1-f f^\dagger) f^\dagger=f^\dagger,
\end{split}
\end{align}
whence the Clifford algebra representation of a qubit in a general superposition state  $\ket{\psi}=\alpha\ket{0}+\beta\ket{1}$, for arbitrary complex numbers $\alpha,\beta$,  is given by
\begin{align} \label{qubit}
\psi=(\alpha+\beta f^\dagger)I=\alpha f f^\dagger+\beta f^\dagger \in \Ss\subset\C_2.
\end{align}
However the basis \eqref{01} of spinor space $\mathbb{S}$ is orthogonal with respect to the Hermitian product in Clifford algebra $\C_2$ defined by \eqref{HP}, it is not orthonormal since the length of the basis elements equals $1/2$ due to the spinorial nature of the representation. To make the basis orthonormal we will modify the Hermitian product in \eqref{HP} by this factor, namely we will assume  $\langle\varphi|\psi\rangle=2[\varphi^\dagger \psi]_0$ for any spinors $ \varphi,\psi\in\C_2.$ Indeed, then we compute
\begin{align*}
\braket{0|1}&=2[f f^\dagger f^\dagger]_0=0,\\
\braket{0|0}&=\braket{1|1}=2[f f^\dagger]_0=1.
\end{align*}
\begin{remark} \label{rem_notation}
	The choice of idempotent $I$ is motivated by conventions in physics for creation and annihilation operators. Indeed, $f^\dagger$ is a realization of the abstract creation operator of the so-called CAR algebra and thus we want it to represent the qubit state $\ket{1}$ rather than $\ket{0}$.
\end{remark}

A single qubit gate is represented by an unitary element in Clifford algebra $\C_2$ and it acts on a qubit in spinor space $\Ss$ by left multiplication. Obviously the identity gate is defined by $1\in\C_2$ and serially wired gates are given by the product of the individual representatives in $\C_2$ due to the associativity of the Clifford product. For our choice of the basis of qubit states the commonly used quantum gates operating on a single qubit are represented in terms of the Witt basis as follows.

\begin{proposition} \label{single_gates}
	Representing the basic qubit states in the complex Clifford algebra $\C_2$  as $\ket{0}=ff^\dagger$, $\ket{1}=f^\dagger$
	we get representations of single qubit gates in $\C_2$
	\begin{align*}
	\text{X-gate: } \: \lambda_X&=f^\dagger +f \\
	\text{Y-gate: } \: \lambda_Y&=if^{\dagger} -i f \\
	\text{Z-gate: } \:  \lambda_Z&=ff^{\dagger} -f^{\dagger} f 
	\end{align*}
\end{proposition}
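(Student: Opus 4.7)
The plan is to verify the proposition by direct computation: for each of the three proposed Clifford elements $\lambda_X$, $\lambda_Y$, $\lambda_Z$, I will check that left multiplication on the qubit basis states $\ket{0}=ff^\dagger$ and $\ket{1}=f^\dagger$ reproduces the standard Pauli action, namely $X\ket{0}=\ket{1},\ X\ket{1}=\ket{0}$; $Y\ket{0}=i\ket{1},\ Y\ket{1}=-i\ket{0}$; $Z\ket{0}=\ket{0},\ Z\ket{1}=-\ket{1}$. Since left multiplication in $\C_2$ is $\C$-linear and the spinor space $\Ss$ is spanned by $\ket{0},\ket{1}$, this extends automatically to an arbitrary qubit $\psi=\alpha\ket{0}+\beta\ket{1}$ in the form \eqref{qubit}.

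The only algebraic tools I need are the Grassmann identities $f^2=(f^\dagger)^2=0$ and the duality identity $ff^\dagger+f^\dagger f=1$ from \eqref{Grassmann}--\eqref{duality}. From these I deduce once and for all the working relations $f^\dagger f=1-ff^\dagger$, $(ff^\dagger)^2=ff^\dagger$, $f^\dagger(ff^\dagger)=(1-ff^\dagger)f^\dagger=f^\dagger$, $(ff^\dagger)f^\dagger=0$, and $(f^\dagger f)f=f$. With these in hand the six products are essentially mechanical: for instance $\lambda_X\ket{0}=(f^\dagger+f)ff^\dagger=f^\dagger ff^\dagger+f^2f^\dagger=f^\dagger=\ket{1}$, and $\lambda_X\ket{1}=(f^\dagger+f)f^\dagger=(f^\dagger)^2+ff^\dagger=ff^\dagger=\ket{0}$. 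The $\lambda_Y$ case is identical up to factors of $\pm i$, and for $\lambda_Z$ the vanishing of $ff^\dagger f^\dagger$ and the reduction $f^\dagger ff^\dagger=f^\dagger$ produce the required sign change on $\ket{1}$.

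For completeness I would then verify that each $\lambda$ is genuinely a \emph{gate}, i.e. a unitary element of $\C_2$ in the sense of \eqref{unitary}. Because Hermitian conjugation is an antiautomorphism with $(f^\dagger)^\dagger=f$, it is immediate that $\lambda_X^\dagger=\lambda_X$, $\lambda_Y^\dagger=\lambda_Y$ and $\lambda_Z^\dagger=\lambda_Z$, so $\lambda^\dagger\lambda=\lambda^2$ in each case, and expanding each square with $f^2=(f^\dagger)^2=0$ and $ff^\dagger+f^\dagger f=1$ collapses it to $1$.

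There is no real obstacle here; the proof is a bookkeeping exercise, and the only recurring subtlety is remembering to rewrite $f^\dagger f$ as $1-ff^\dagger$ (or conversely) at the right moment so that the Grassmann identities collapse the remaining terms. The conceptual content of the proposition lies not in the verification itself but in the observation that the abstract creation/annihilation structure of the Witt basis already encodes the Pauli algebra in a coordinate-free way, matching the choice of idempotent discussed in Remark \ref{rem_notation}.
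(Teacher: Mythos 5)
Your verification is correct: all six products follow from the Grassmann and duality identities exactly as you compute them, and the added unitarity check is sound. However, your route is the reverse of the paper's. The paper does not verify the stated elements by acting on $\ket{0},\ket{1}$; instead it first identifies the Hermitian duals $\bra{0}=ff^\dagger$, $\bra{1}=f$ via the Hermitian conjugation \eqref{HC}, computes the four dyads $\ket{0}\bra{0}=ff^\dagger$, $\ket{0}\bra{1}=f$, $\ket{1}\bra{0}=f^\dagger$, $\ket{1}\bra{1}=f^\dagger f$ as Clifford products, and then reads off each gate as the appropriate linear combination of these projectors (e.g. $X=\ket{1}\bra{0}+\ket{0}\bra{1}=f^\dagger+f$), i.e. it translates the Dirac outer-product expansion directly into the Witt basis. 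Your approach is a direct check of the action on the spinor basis, which is what the paper relegates to the subsequent example for the $X$-gate; it is more elementary but presupposes the answer, whereas the paper's projector computation is constructive and is the template reused later (Example \ref{ex_2gates}, Corollary \ref{1unitary_el}) to build arbitrary gates from their matrix elements. Both arguments are complete for the proposition as stated; your extra unitarity verification is not needed for the claim but does confirm the elements qualify as gates in the sense of \eqref{unitary}.
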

\begin{proof}
By the definition of Hermitian conjugation in Clifford algebra $\C_2$, the identification of basic qubit states  $\ket{0}=ff^\dagger$, $\ket{1}=f^\dagger$ leads to the identification of their Hermitian duals $\bra{0}= ff^\dagger$, $\bra{1}= f$. Then we get a representation of projection operators 
	\begin{align*}
	\ket{0}\bra{0}=ff^\dagger ff^\dagger=ff^\dagger, \;
	\ket{0}\bra{1}=ff^\dagger f=f, \;
	\ket{1}\bra{0}=f^\dagger ff^\dagger=f^\dagger, \;
	\ket{1}\bra{1}=ff^\dagger,
	\end{align*}
	where we used the Grassmann and duality identities for the Witt basis elements $f, f^\dagger.$
	The representations of single qubit gates from the proposition then  follow by their definitions on basic qubit states.
\end{proof}
\begin{remark}
	Using basis $1,e_1,e_2,e_1\wedge e_2$ of Clifford algebra $\C_2$ generated by an orthonormal basis $e_1, e_2$ of $\C^2$ instead of the Witt basis the representation of single qubit gates $X,Y$ and $Z$ are given as  
		\begin{align*}
 \lambda_X=e_1, \;
\lambda_Y=-e_2, \; 
  \lambda_Z=i e_{1} \wedge e_2 
	\end{align*}
\end{remark}

\begin{example}
Let us discuss the some of these basic quantum gates in more detail. The $X$-gate is the quantum equivalent to of the NOT gate for classical computers,  sometimes called a bit-flip as it maps the basis state $\ket{0}$ to $\ket{1}$ and vice versa. Hence we have
\begin{align*}
X=\ket{1}\bra{0}+\ket{0}\bra{1}=f^\dagger +f.
\end{align*}
which is equal to $e_1$ by definition of the Witt basis elements.
We can also compute directly the action of the corresponding element of $\C_2$ on a qubit basis in $\Ss$ to prove the correctness of the representation
\begin{align*}
\lambda_X \ket{0}&=(f^\dagger +f)f f^\dagger=f^\dagger f f^\dagger=f^\dagger =\ket{1}, \\
\lambda_X \ket{1}&=(f^\dagger +f) f^\dagger=f f^\dagger=\ket{0}.
\end{align*}
Similarly, for the  phase-flip $Z$-gate we get $Z=\ket{0}\bra{0}-\ket{1}\bra{1}= f f^\dagger - f^\dagger f$ since it leaves the basis state $\ket{0}$ unchanged and maps $\ket{1}$  to $-\ket{1}$   and a general phase-shift gate $\ket{1}\mapsto e^{i\varphi}\ket{1}$ is given by $R_\varphi=f f^\dagger + e^{i\varphi} f^\dagger f.$
The effect of a series circuit where $X$ is put after $Z$ can be described as a single gate represented by the Clifford product
\begin{align*}
XZ=(f^\dagger +f)(f f^\dagger - f^\dagger f)=f^\dagger-f.
\end{align*} 
It is also easy to check the involutivity of the single qubit gates $X,Y$ and $Z$, e.g. for a serial composition of two  $X$-gates we have
 \begin{align*}
 X^2=(f^\dagger +f)(f^\dagger +f)=f^\dagger f+f^\dagger f = 1.
 \end{align*} 
 The representations of rotation operator gates can be obtained directly by computing exponentials in of gates $X,Y,Z$ in $\C_2$. Consequently one can get the formula for the Hadamard gate, an important single qubit gate that we have not discussed yet. Namely, since $Y^2=1$, we compute
  \begin{align*}
 H&=X\operatorname{exp}(-iY\pi/4)=(f^\dagger+f)(\cos\tfrac{\pi}{4}+\sin\tfrac{\pi}{4}(f^\dagger-f))\\
 &=\tfrac{1}{\sqrt{2}}(ff^{\dagger} -f^{\dagger}f + f +  f^{\dagger}),
 \end{align*} 
which is equal to $\tfrac{1}{\sqrt{2}}(e_1+ie_1\wedge e_2)$ in the orthonormal basis.
\end{example}

The Clifford algebra representations of basic single qubit gates $X,Y$ and $Z$ in Proposition \ref{single_gates} determine an explicit form of a unitary element in $\C_2$ representing a general single qubit gate in terms of the Witt basis.
\begin{corollary} \label{1unitary_el}
	Each single qubit gate operating on qubit \eqref{qubit} is represented by an element of the complex Clifford algebra $\C_2$ of a form
	\begin{align} \label{1gate}
	\lambda=a f f^{\dagger} + b f +c f^{\dagger} + d f^{\dagger}f.
	\end{align}
	where $a,b,c,d \in\C$ are complex numbers such that $a^2 +  c^2=b^2  + d^2=1,\bar{b}a    +\bar{d}c  =0$ holds.
\end{corollary}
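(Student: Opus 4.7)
The plan is to exploit the observation that the four elements $I=ff^{\dagger}$, $f$, $f^{\dagger}$, $K=f^{\dagger}f$ form a basis of $\C_{2}$; since $I+K=1$, this is just a trivial linear change of basis from the Grassmann basis $(1,f,f^{\dagger},ff^{\dagger})$ used earlier. Hence every $\lambda\in\C_{2}$ automatically admits the shape displayed in \eqref{1gate}, and the content of the corollary reduces to translating the unitarity condition $\lambda^{\dagger}\lambda=1$ into algebraic relations on the coefficients $a,b,c,d\in\C$.

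First I would assemble a small multiplication table for the pairwise products in $\{I,f,f^{\dagger},K\}$. From the isotropy $f^{2}=(f^{\dagger})^{2}=0$, the duality $ff^{\dagger}+f^{\dagger}f=1$, and the already verified properties $I^{2}=I$, $K^{2}=K$, $IK=KI=0$, one obtains $If=f=fK$ and $Kf^{\dagger}=f^{\dagger}=f^{\dagger}I$, while the four mixed products $fI$, $If^{\dagger}$, $Kf$, $f^{\dagger}K$ all vanish, together with $ff^{\dagger}=I$ and $f^{\dagger}f=K$. I would then compute $\lambda^{\dagger}$ in the same basis: since Hermitian conjugation is an anti-automorphism that complex-conjugates scalars, fixes the self-adjoint idempotents $I$ and $K$, and interchanges $f$ with $f^{\dagger}$, I get $\lambda^{\dagger}=\bar{a}\,I+\bar{c}\,f+\bar{b}\,f^{\dagger}+\bar{d}\,K$.

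The core step is to multiply out $\lambda^{\dagger}\lambda$ using the table. Of the sixteen products that arise, those pairing an idempotent with a non-matching Witt vector, two like isotropic vectors, or the two orthogonal idempotents must vanish, so the expression collapses to
\begin{equation*}
\lambda^{\dagger}\lambda=(|a|^{2}+|c|^{2})\,I+(|b|^{2}+|d|^{2})\,K+(\bar{a}b+\bar{c}d)\,f+(\bar{b}a+\bar{d}c)\,f^{\dagger}.
\end{equation*}
Equating this with $1=I+K$ and invoking linear independence of $\{I,f,f^{\dagger},K\}$ forces $|a|^{2}+|c|^{2}=1$, $|b|^{2}+|d|^{2}=1$, together with the two off-diagonal equations $\bar{a}b+\bar{c}d=0$ and $\bar{b}a+\bar{d}c=0$; the latter pair are complex conjugates of one another, so one of them suffices, and this is precisely the system recorded in the corollary.

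No real obstacle is expected: the whole argument reduces to a finite, purely algebraic expansion using the Witt-basis multiplication rules and the Hermitian-conjugation rules established earlier. The only care required is in bookkeeping, i.e.\ keeping track of the numerous vanishings caused by the isotropy of $f$, $f^{\dagger}$ and by $IK=KI=0$. As a sanity check one may verify the companion identity $\lambda\lambda^{\dagger}=1$ in the same way; it produces the same three independent conditions on $a,b,c,d$.
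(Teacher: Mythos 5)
Your proof is correct and follows essentially the same route as the paper: expand $\lambda$ in the basis $(ff^{\dagger},f,f^{\dagger},f^{\dagger}f)$, compute $\lambda^{\dagger}\lambda$ via the Grassmann and duality identities for the Witt basis, and compare coefficients with $1=ff^{\dagger}+f^{\dagger}f$. Incidentally, your placement of the diagonal terms ($|a|^{2}+|c|^{2}$ with $ff^{\dagger}$ and $|b|^{2}+|d|^{2}$ with $f^{\dagger}f$) is the correct one --- the paper's displayed computation has these two swapped, a harmless slip since both must equal $1$ --- and your moduli $|a|^{2}+|c|^{2}$, $|b|^{2}+|d|^{2}$ are the intended reading of the paper's $a^{2}+c^{2}$, $b^{2}+d^{2}$.
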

\begin{proof}
	Can be deduced from the description of projection operators in the proof of Proposition \ref{single_gates} or directly by writing the condition on unitary elements $\lambda^\dagger \lambda=1$ in terms of the Witt basis, see \eqref{unitary}, as follows. An arbitrary element $\lambda\in \C_2$ can be written as in \eqref{1gate} for some complex numbers $a,b,c,d\in\C$ since 	the four--tuple $(f f^\dagger, f^\dagger, f, f^\dagger f)$ form a basis of $\C_2$. The right-hand side of the equation for unitary elements can be written as  $ff^\dagger+f^\dagger f$ and for the left-hand side  we compute
	\begin{align*}
	\lambda^{\dagger} \lambda & =  (\bar{a} f f^{\dagger} + \bar{b} f^\dagger +\bar{c} f + \bar{d} f^{\dagger}f) (a f f^{\dagger} + b f +c f^{\dagger} + d f^{\dagger}f)\\
	&=  (a^2 +  c^2 )f^{\dagger}f  + 	( \bar{b}a    +\bar{d}c) f^{\dagger} +
	( \bar{a}b   +\bar{c}d) f    + ( b^2  + d^2 )f f^{\dagger},
	\end{align*}
	where we used  the definition of the Hermitian conjugation and its properties discussed in section \ref{ClC} and
	where we repeatedly used the duality and Grassmann identities for the Witt basis elements $f, f^\dagger$,
	see section \ref{Witt}. The result follows by comparing the coefficients on both sides of the equation while noting that $ \bar{a}b   +\bar{c}d$ is the complex conjugate of $ \bar{b}a    +\bar{d}c$.
\end{proof}

Note that the condition on these coefficients can be expressed equivalently as the orthonormality of complex vectors $(a,c)$ and $(b,c)$ with respect to the standard Hermitian product on $\C^2.$ Hence the coefficients define a $2 \times 2$ unitary matrix proving  the equivalence between Clifford and matrix descriptions. Namely, each unitary element in $\C_2$ corresponds to a matrix in $U(2)$ as follows.
\begin{align} \label{correspondence}
a f f^{\dagger} + b f +c f^{\dagger} + d f^{\dagger}f 
\leftrightarrow
\begin{pmatrix}
a & b \\ c & d 
\end{pmatrix}
\end{align}
\begin{remark}
	Upon restriction to a normalized qubit $\langle\psi|\psi\rangle=2[\psi^\dagger \psi]_0=\alpha^2+\beta^2=1$, it is sufficient to consider gates from the special unitary group $SU(2)$, the connected component of $U(2).$ Such gates  are represented by unitary matrices with unite determinant and that they can be written as displayed in \eqref{correspondence} for $b=-\bar{c}$ and $d=\bar{a}$. Hence the elements in $\C_2$ representing subgroup $SU(2)$ are of a form $\lambda=a f f^{\dagger} -\bar{c} f +c f^{\dagger} + \bar{a} f^{\dagger}f.$
\end{remark}

\subsection{Multiple qubits and multiple qubit gates}
Following constructions in section \ref{CCA} the Hilbert space of states of a general $n$-qubit can be represented by a spinor space in the complex Clifford algebra $\C_{2n}$ and  $n$-qubit gates as unitary elements in the same algebra. For the explicit description we choose Witt basis $(f_1,f_1^\dagger,\dots,f_n,f_n^\dagger)$ of complex coordinate space $\C^{2n}$ leading to the Witt basis of the Clifford algebra $\C_{2n}$ formed by $2^{2n}$ geometric products of these elements given by \eqref{C2n_basis}. From the $2^n$ spinor spaces contained in the algebra  we choose the spinor space $\Ss_n=\C_{2n}I$ defined by primitive idempotent
\begin{align} \label{In}
I=I_{1}\cdots I_{n}=f_1 f_1^\dagger \cdots f_n f_n^\dagger
\end{align}
for modelling states of a $n$-qubit. This choice is motivated by its identification with the Grassmann algebra generated by "creation operators" $f_1^\dagger,\dots,f_n^\dagger$. Indeed, for such a realization of the spinor space of a $n$-qubit we have 
\begin{align} \label{Sn}
\Ss_n=\C_{2n}I=\Lambda(f_1^\dagger,\dots,f_n^\dagger)I
\end{align}
since  $f_jI=0$ for each $j=1,\dots,n$ by  Grassmann and duality identities for the Witt basis elements.
Similarly to the case of a single qubit, we multiply the Hermitian product \eqref{HP} by a normalization factor $2^n$ that reflects the spinorial nature of our representation in order to get a simple formula for elements of unite norm. Namely, for two spinors $\varphi,\psi\in\Ss_n$ we set
\begin{align} \label{HPn}
\langle\varphi|\psi\rangle=2^n[\varphi^\dagger \psi]_0.
\end{align}
With this choice of spinor space and Hermitian product the main results of section \ref{CCA} that we need for representing  qubits and quantum gates in a complex Clifford algebra read as follows.
\begin{proposition} \label{n_qubit}
	Spinor space $\Ss_n\subset\C_{2n}$ given by \eqref{Sn} together with Hermitian product \eqref{HPn} form a Hilbert space of dimension $N=2^n$ with an orthonormal basis 
	\begin{align} \label{Sn_basis}
	\ket{i_1 \cdots i_n}=(f_1^\dagger)^{i_1} \cdots (f_n^\dagger)^{i_n} I,
	\end{align}
	where $ i_1,\dots,i_n\in\{0,1\}$. Unitary transformations are given by left multiplications by unitary elements, i.e. elements of $\C_{2n}$ such that $\lambda^\dagger \lambda=1$.
\end{proposition}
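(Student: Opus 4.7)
The plan is to verify three claims in sequence: (i) the stated $2^n$ elements form a basis of $\Ss_n$, (ii) they are orthonormal under the rescaled Hermitian product \eqref{HPn}, and (iii) unitary elements of $\C_{2n}$ give unitary transformations on $\Ss_n$. Claims (i) and (iii) are short specialisations of material developed in Section \ref{CCA}; the core work is (ii).

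For (i), I would substitute $A = \{1, \ldots, n\}$, $B = \emptyset$ into \eqref{spinor_G} to obtain $\Ss_n = \Lambda(f_1^\dagger, \ldots, f_n^\dagger) I$ directly. The Grassmann identities \eqref{Grassmann} make $f_1^\dagger, \ldots, f_n^\dagger$ anticommuting nilpotent generators of an exterior algebra, so the $2^n$ monomials $(f_1^\dagger)^{i_1} \cdots (f_n^\dagger)^{i_n}$ with $i_k\in\{0,1\}$ form a basis of that algebra, and right multiplication by $I$ is injective (as inspection of top-grade components after reduction shows distinct monomials produce linearly independent spinors). Positivity of the Hermitian product on $\C_{2n}$ established in Section \ref{ClC} restricts to $\Ss_n$, making it a $2^n$-dimensional (hence complete) Hilbert space.

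The orthonormality (ii) is the main computation. Using $I^\dagger=I$ (each $I_k$ is self-adjoint and they mutually commute) together with the antiautomorphism property of $^\dagger$, I would first write $\ket{i_1 \cdots i_n}^\dagger = I\, f_n^{i_n} \cdots f_1^{i_1}$. Then
\begin{align*}
\ket{i_1 \cdots i_n}^\dagger \ket{j_1 \cdots j_n} = I\, f_n^{i_n} \cdots f_1^{i_1} (f_1^\dagger)^{j_1} \cdots (f_n^\dagger)^{j_n}\, I
\end{align*}
can be reduced index by index using three facts from Section \ref{Witt}: the idempotent $I_k$ commutes with $f_l, f_l^\dagger$ for $l\neq k$ (two anticommutations cancel); the annihilations $f_k I_k = 0 = I_k f_k^\dagger$; and the stabilisations $I_k f_k = f_k$, $f_k^\dagger I_k = f_k^\dagger$. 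If any $i_k\neq j_k$ an annihilation triggers and the whole product vanishes; otherwise the expression collapses to $I^2=I$. Finally, expanding $I = \prod_k (\tfrac12 + f_k \wedge f_k^\dagger)$ gives $[I]_0 = 2^{-n}$, since every nonempty cross term is of positive grade. Hence $\langle i_1 \cdots i_n | j_1 \cdots j_n \rangle = 2^n \cdot 2^{-n}\, \delta_{i_1 j_1} \cdots \delta_{i_n j_n}$ as required. The main obstacle is keeping the sign and idempotent bookkeeping consistent across the iterated reductions; I would handle this by an induction on $n$.

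For (iii), since $\Ss_n = \C_{2n} I$ is a left ideal, left multiplication by any $\lambda\in\C_{2n}$ preserves $\Ss_n$; and if $\lambda^\dagger\lambda=1$ then for any $\varphi,\psi\in\Ss_n$
\begin{align*}
\langle \lambda \varphi | \lambda \psi \rangle = 2^n [\varphi^\dagger \lambda^\dagger \lambda \psi]_0 = 2^n [\varphi^\dagger \psi]_0 = \langle \varphi | \psi \rangle,
\end{align*}
which is exactly the specialisation of the representation described after \eqref{unitary} to our chosen spinor space and rescaled inner product.
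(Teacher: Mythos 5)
Your proposal is correct and follows essentially the same route as the paper: it reduces the statement to checking orthonormality of the basis \eqref{Sn_basis} via the Grassmann/duality identities, the annihilation and commutation properties of the idempotents $I_k$, and the observation that $[I]_0=2^{-n}$, while the Hilbert-space and unitarity claims are inherited from Section \ref{CCA} exactly as in the paper. The only extra (harmless) material is your explicit discussion of spanning/independence via \eqref{spinor_G}, which the paper delegates to the general construction; note that linear independence also follows at once from the orthonormality you prove.
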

\begin{proof}
	The proposition follows from the constructions described in section  \ref{CCA}, we only need to check the orthonormality of basis elements \eqref{Sn_basis}. For $ i_k,j_k\in\{0,1\}$, where $k=1,\dots,n$, the Hermitian product of two basis elements is given by
	\begin{align*}
	\braket{j_1 \cdots j_n|i_1 \cdots i_n}=2^n[I^\dagger(f_n)^{j_n} \cdots (f_1)^{j_1} (f_1^\dagger)^{i_1} \cdots (f_n^\dagger)^{i_n} I]_0
	\end{align*}
	by definition. Let us prove the orthogonality first. Assume  $i_k=0$ and $j_k=1$ for some $k$. Since the Witt basis element $f_k$ anti-commutes with elements $f_\ell$  and $f_\ell^\dagger$  for each $\ell\neq k$ by \eqref{Grassmann},  the previous formula can be expressed in a form $\pm 2^n[I^\dagger  \cdots  f_kI]_0$  and it vanishes since $f_kI=0$ by \eqref{Grassmann} and \eqref{duality}. Similarly, if  $i_k=1$ and $j_k=0$, then the above formula for Hermitian product vanishes since it contains factor $I^\dagger f_k^\dagger=0.$ Hence the Hermitian product vanish if $i_k\neq j_k$ for some $k$ and the the orthogonality is proven. To prove the normality we notice that $ (f_k)^{i_k} (f_k^\dagger)^{i_k}=I_{i_k}$ is an idempotent commuting with all elements $f_\ell, f_\ell^\dagger$, where $\ell\neq k,$ and all idempotents $I_{i_\ell}$. Hence we get
		\begin{align*}
		\braket{i_1 \cdots i_n|i_1 \cdots i_n}=2^n[I^\dagger(f_n)^{i_n} \cdots (f_1)^{i_1} (f_1^\dagger)^{i_1} \cdots (f_n^\dagger)^{i_n} I]_0=2^n[I^\dagger \cdots I_{i_\ell} \cdots I]_0,
	\end{align*}
	where $I_{i_\ell}$ are idempotents  such that $i_\ell=1.$ The primitive idempotent $I$ satisfies $I^\dagger=I$ and it also satisfies  $I_{i_\ell}I=I$ for all $\ell$ since it is given by product of all such commuting idempotents, namely  $I=I_1\cdots I_n$ by definition. So we compute
	\begin{align*}
		\braket{i_1 \cdots i_n|i_1 \cdots i_n}=2^n[ I]_0=2^n[I_1\cdots I_n]_0=1,
	\end{align*}
	where the last equality follows from the decomposition of idempotents into grade components. Namely, we have
	$I_k=1/2+f_k\wedge f_k^\dagger$ and the geometric product of $f_k\wedge f_k^\dagger$ with elements not containing $f_k$ neither $f_k^\dagger$ either vanishes or yields an element of grade at least two.
\end{proof}

\begin{remark}
Note that we chose MSB bit numbering for $n$-qubits. The choice of the LSB bit numbering would lead to different but isomorphic representations in $\C_{2n}.$  
\end{remark}

However an explicit description of a general unitary element in $\C_{2n}$ representing a $n$-qubit gate similar to the description of a general single qubit gate given in Corollary \ref{1unitary_el} is possible, it is more sophisticated and thus not helpful. The same happens in matrix representation and it reflects the complexity of unitary group $U(N)$.
On the other hand, given a  specific $n$-qubit gate its representation in $\C_{2n}$ is obtained by rewriting the defining formula in terms of projection operators in Dirac formalism via identification of $n$-qubit states \eqref{Sn_basis}. To make it clear we elaborate some examples for $n=2$ in more detail. 
\begin{example} \label{ex_2gates}
	In the case of a 2-qubit we work in Clifford algebra $\C_{4}$ of dimension is $2^4=16$. Using the Witt basis $f_1,f_1^\dagger,f_2,f_2^\dagger$ of $\C^4$ we define a primitive idempotent $I=f_1 f_1^\dagger f_2 f_2^\dagger$ which gives rise to spinor space $\Ss_2=\C_4I$ of dimension $2^2=4$ with an orthonormal basis
 \begin{align*}
 \ket{00}=I,\,  \ket{10}=f_1^\dagger I,\, \ket{01}=f_2^\dagger I,\,  \ket{11}=f_1^\dagger f_2^\dagger I
 \end{align*}
 Using this representation of basis states and the definition of Hermitian conjugation we can form 16 projection operators, e.g. $\ket{00}\bra{00}=I I^\dagger=I$, $\ket{00}\bra{01}=I I^\dagger f_2 = I f_2=f_1 f_1^\dagger f_2 ,$ etc. Specific 2-qubit gates are then formed by a complex linear combinations of these elements in $\C_4.$  We demonstrate the functionality of the spinor representation on 2-qubit gates known as  CNOT, CZ and SWAP, see the diagrammatic descriptions of these gates in Figure \ref{2gates}.
  \begin{align*}
 f_1f_1^\dagger \ket{00}&=f_1f_1^\dagger f_1f_1^\dagger f_2f_2^\dagger=I = \ket{00},\\
 f_1f_1^\dagger \ket{01}&=f_1f_1^\dagger f_2^\dagger f_1f_1^\dagger f_1f_1^\dagger =  f_2^\dagger I =\ket{01}
 \end{align*}
\begin{figure}[h] 
	\centering
	\includegraphics[width=2cm,valign=t]{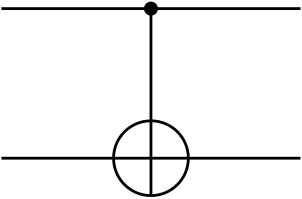}
	\hspace{2em}
    \includegraphics[width=2cm,valign=t]{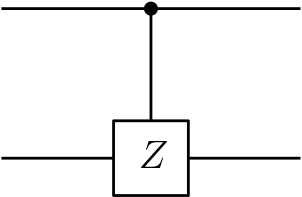}
	\hspace{2em}
   \includegraphics[width=2cm,valign=t]{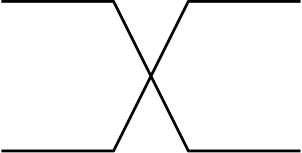}
	\caption{2-qubit gates CNOT, CZ and SWAP respectively}
	\label{2gates}
\end{figure}
 We systematically use Grassmann and duality identities for Witt basis elements, identity $II^\dagger=I$ in particular.
 \begin{align*}
 \lambda_{\operatorname{CNOT}}&=\ket{00}\bra{00}+\ket{01}\bra{01}+\ket{11}\bra{10}+\ket{10}\bra{11}\\
 &=I+f_2^\dagger I f_2+f_1^\dagger f_2^\dagger I f_1+ f_1^\dagger I f_2 f_1\\
 &=f_1 f_1^\dagger f_2 f_2^\dagger +f_1 f_1^\dagger f_2^\dagger f_2 - f_1^\dagger f_1 f_2^\dagger - f_1^\dagger f_1 f_2 \\
 &=f_1f_1^\dagger -  f_1^\dagger f_1 (f_2^\dagger + f_2), \\
   \lambda_{\operatorname{CZ}}&= \ket{00}\bra{00}+\ket{01}\bra{01}+\ket{10}\bra{10}-\ket{11}\bra{11}\\
 &=f_1f_1^\dagger + f_1^\dagger f_1 (f_2f_2^{\dagger} -f_2^{\dagger} f_2), \\
    \lambda_{\operatorname{SWAP}}&=\ket{00}\bra{00}+\ket{11}\bra{11}+\ket{10}\bra{01}+\ket{01}\bra{11}\\
 &=f_1f_1^{\dagger}f_2f_2^{\dagger}+
 f_1^{\dagger}f_1f_2^{\dagger}f_2+f_1^{\dagger}f_2-f_1f_2^{\dagger}.
 \end{align*}
\end{example}

\subsection{Tensor product of gates}
To describe effectively quantum logic circuits in the complex Clifford algebra, it remains to discuss representations of parallel quantum gates, i.e. representations of tensor product of gates. First of all we realize that the representation of tensor products of states is already determined by  Proposition \ref{n_qubit}. Namely, a $n$-qubit  $\ket{i_1\cdots i_n}=\ket{i_1}\otimes\cdots\otimes\ket{i_n}$ is in the Clifford algebra represented by geometric product of representations of individual qubits. Hence a tensor product $\ket{\varphi} \otimes \ket{\psi}$ is represented by geometric product $\varphi\psi$, where the spinors $\varphi,\psi$ are assumed to lie in disjoint vector spaces viewed as two orthogonal subspaces of their union. Now consider an action of a tensor product of gates $\lambda\otimes \mu$ given  by unitary elements $\lambda,\mu$ of the Clifford algebra on such a state. The resulting state $\lambda \varphi \otimes \mu \psi$  is represented by $\lambda \varphi\mu \psi$ which is different from $\lambda \mu \varphi\psi$ in general due to the skew-symmetry of the geometric product. Namely, for two blades $e_A,e_B$ determined by disjoint multi-indices $A,B$ we have 
\begin{align*}
e_A e_B = (-1)^{|A| |B|} e_B e_A.
\end{align*}
and so the Clifford algebra has the structure of a superalgebra. Hence the geometric product does not represent the ordinary tensor product but it represents the super tensor product. It has the same structure as a vector space but with the multiplication rule determined by
\begin{align} \label{STP}
(e_A e_B) (e_C  e_D)=(-1)^{|B| |C|}(e_Ae_C )( e_Be_D)
\end{align}
on blades.
Consequently, the geometric product identifies complex Clifford algebra for $n$-gates  with the super tensor product of Clifford algebras for single qubit gates. The ordinary ungraded tensor product $\lambda_1 \otimes \cdots \otimes \lambda_n$ of gates $\lambda_k$ from distinct copies of $\C_2$ is in Clifford algebra $\C_{2n}$ represented by $ \lambda_1\cdots \lambda_n$ only up to the sign. Although this sign depends on the $n$-qubit on which we act by \eqref{STP}  in general, it is completely determined by the set $\{\lambda_1,\dots,\lambda_n\}$ in the case that $\lambda_k$ for each $k=1,\dots,n$ is one of the basis elements of $\C_2$
\begin{align} \label{lambda_k}
\lambda_k\in\{f_k f_k^\dagger,f_k^\dagger f_k, f_k, f_k^\dagger\}.
\end{align}

\begin{proposition} \label{TP}
A tensor product $ \lambda_1 \otimes \cdots \otimes \lambda_n$, where  $\lambda_k \in \{ f_k f_k^{\dagger},f_k^{\dagger} f_k, f_k, f_k^{\dagger}   \} $ for each $ k = 1, \dots , n$, is represented by geometric product $(-1)^s \lambda_1 \cdots \lambda_n$,  where the sign is determined by the cardinality of the sets $S_i$, such that  $s=\sum_i |S_i|$, where 
\begin{align} \label{S}
    S_i=  
    \{ \ell < i \: : \: \lambda_l =f_{\ell} \text{ or }  \lambda_{\ell} = f_{\ell}^{\dagger}f_{\ell} \}
    \text{ in the case if  }{\lambda_i = f_i \text{ or }\lambda_i
    =f_i^{\dagger}} .
\end{align}
\end{proposition}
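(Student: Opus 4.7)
The plan is to verify the identity by evaluating both operators on an arbitrary basis state $\ket{i_1\cdots i_n}$ of $\Ss_n$ and comparing. First I would rewrite the basis state as a geometric product of single-qubit states: using $f_\ell^\dagger f_\ell f_\ell^\dagger = f_\ell^\dagger$ together with the fact that $f_k f_k^\dagger$ commutes with anything built from $e_\ell, e_{\ell+n}$ for $\ell\neq k$, one obtains
\[
\ket{i_1\cdots i_n}=\ket{i_1}_1\ket{i_2}_2\cdots\ket{i_n}_n,
\]
where each factor $\ket{i_k}_k\in\{f_k f_k^\dagger,\,f_k^\dagger\}$ lies in the Clifford subalgebra of $\C_{2n}$ generated by $e_k,\,e_{k+n}$. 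By the representation of tensor products of states recalled just before the proposition, the operator $\lambda_1\otimes\cdots\otimes\lambda_n$ acts on this state as the geometric product $(\lambda_1\ket{i_1}_1)(\lambda_2\ket{i_2}_2)\cdots(\lambda_n\ket{i_n}_n)$, whereas the geometric product $\lambda_1\cdots\lambda_n$ acts as $\lambda_1\lambda_2\cdots\lambda_n\ket{i_1}_1\ket{i_2}_2\cdots\ket{i_n}_n$.

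Next, I would use the super-commutation rule \eqref{STP} to bring the second expression into the shape of the first. For $\ell\neq k$, the elements $\lambda_k$ and $\ket{i_\ell}_\ell$ lie in subalgebras generated by the disjoint pairs $\{e_k,e_{k+n}\}$ and $\{e_\ell,e_{\ell+n}\}$, and each is $\mathbb{Z}_2$-homogeneous, so they super-commute with sign $(-1)^{p(\lambda_k)\,p(\ket{i_\ell}_\ell)}$, where $p$ denotes parity. A quick inspection yields $p(\lambda_k)=1$ iff $\lambda_k\in\{f_k,f_k^\dagger\}$, since $f_kf_k^\dagger=\tfrac12+f_k\wedge f_k^\dagger$ and $f_k^\dagger f_k=\tfrac12-f_k\wedge f_k^\dagger$ are both even while $f_k,f_k^\dagger$ are of grade one; likewise $p(\ket{i_\ell}_\ell)=i_\ell$. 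Moving each $\lambda_k$ to the right past each $\ket{i_\ell}_\ell$ with $\ell<k$ therefore produces an overall sign $(-1)^{s'}$ with
\[
s'=\sum_{1\leq\ell<k\leq n} p(\lambda_k)\,i_\ell \quad(\bmod\,2),
\]
an exponent that a priori depends on the multiindex $(i_1,\dots,i_n)$.

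The main remaining point, and the step I expect to require the most care, is removing this apparent state-dependence. The product $\lambda_1\cdots\lambda_n\ket{i_1\cdots i_n}$ vanishes unless every single-qubit action $\lambda_\ell\ket{i_\ell}_\ell$ is nonzero, and a direct case check in the subalgebra shows that nonvanishing forces $i_\ell=1$ precisely when $\lambda_\ell\in\{f_\ell,\,f_\ell^\dagger f_\ell\}$ (and $i_\ell=0$ otherwise). Substituting this constraint converts $s'$ into $\sum_k|S_k|$, where $k$ ranges over indices with $\lambda_k\in\{f_k,f_k^\dagger\}$ and $S_k$ is as defined in \eqref{S}. On basis states where the operator vanishes the identity is trivial; on the unique basis state for which it does not vanish, the sign is controlled entirely by the collection $\{\lambda_1,\dots,\lambda_n\}$, so $\lambda_1\otimes\cdots\otimes\lambda_n=(-1)^s\lambda_1\cdots\lambda_n$ as operators on $\Ss_n$, as claimed.
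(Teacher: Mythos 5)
Your proof is correct and follows essentially the same route as the paper's: represent the state as a geometric product of single-qubit spinors, super-commute the $\lambda_k$ past them to compare $\lambda_1\psi_1\cdots\lambda_n\psi_n$ with $\lambda_1\cdots\lambda_n\psi_1\cdots\psi_n$, and use the annihilation properties of the Witt elements to see that only the terms with $i_\ell=1$ for $\lambda_\ell\in\{f_\ell,f_\ell^\dagger f_\ell\}$ survive, which fixes the sign. Your version merely carries out on basis states, and hence a bit more explicitly, the state-independence argument that the paper sketches for general spinors.
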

\begin{proof}
A  $n$-qubit $\psi_1 \otimes \cdots \otimes \psi_n$ is represented by geometric product $\psi_1\cdots\psi_n$ of mutually orthogonal spinors $\psi_k$ by \eqref{Sn_basis}.
Representation of a qubit obtained upon the action of $\lambda_1 \otimes \cdots \otimes \lambda_n$ on this $n$-qubit  is given by  $$\lambda_1\psi_1 \cdots \lambda_n\psi_n=(-1)^p\lambda_1\cdots\lambda_n\psi_1\cdots\psi_n$$ since $\lambda_k,\psi_k$ are orthogonal to $\lambda_\ell,\psi_\ell$ for $k\neq \ell.$ Roughly speaking, the sign is determined by how many times we need to commute to get all elements $\lambda_k$ to the left hand side. Thus it depends on spinors $\psi_\ell$, $\ell<k$ which are combinations of components of grade one or two in general. Only the commuting with grade one components does change the sign. However the grade one spinors are multiples of $f_\ell^\dagger$ and they are annihilated by all elements of $\C_{2n}$ except  elements $\lambda_\ell=f_\ell$ and  $\lambda_\ell=f_\ell^\dagger f_\ell$ which act nontrivially. Hence the number of commutation steps that push $\lambda_k$ to the left is equal to the number of such elements $\lambda_\ell,$ $\ell<k.$
\end{proof}

\begin{example} Let us construct 2-qubit gates 	$X \otimes Y$ and $Y \otimes X$ according to Proposition \ref{TP}. First we write these gates as a sum of tensor products of basis Witt basis elements and then for each such summand we compute the cardinality of set $S$ giving the sign of the corresponding geometric product.
	\begin{align*}
	X \otimes Y &= i(f_1^\dagger \otimes f_2^\dagger-f_1^\dagger \otimes f_2+f_1 \otimes f_2^\dagger-f_1 \otimes f_2)\\
	&=i(f_1^\dagger f_2^\dagger-f_1^\dagger  f_2-f_1  f_2^\dagger+f_1  f_2)\\
	Y \otimes X &= i(f_1^\dagger \otimes f_2^\dagger+f_1^\dagger \otimes f_2-f_1 \otimes f_2^\dagger-f_1 \otimes f_2)\\
	&=i(f_1^\dagger f_2^\dagger+f_1^\dagger  f_2+f_1  f_2^\dagger+f_1  f_2)
	\end{align*}
	Let us assume even more simple example of a $X$-gate with a parallel qubit without any gate. If the gate is acting on the first qubit we get a resulting 2-qubit gate $X \otimes \operatorname{id}=X_1$. However, acting on the second qubit we need to write the identity representation as $1=f_1 f_1^\dagger+f_1^\dagger f_1$ since idempotent $K_1=f_1^\dagger f_1$ makes the change of sign in contrast to idempotent $I_1=f_1 f_1^\dagger$,
	\begin{align*}
	\operatorname{id} \otimes X=(I_1+K_1) \otimes X_2=I_1X_2-K_1X_2=f_1 f_1^\dagger(f_2^\dagger +f_2)-f_1^\dagger f_1(f_2^\dagger +f_2).
	\end{align*}
	The representations of controlled gates from example \ref{ex_2gates} can be constructed from tensor product of single qubit gates as follows.
	\begin{align*}
\lambda_{\operatorname{CNOT}}&=I_1\otimes 1 +K_1 \otimes X_2=I_1-K_1 X_2\\
\lambda_{\operatorname{CZ}}&=I_1\otimes 1 +K_1 \otimes Z_2=I_1+K_1 Z_2
	\end{align*}

\end{example}

\begin{example}
	The spinor space $\Ss_3$ representing states of 3-qubits has dimension $2^3=8$ in Clifford algebra $\C_{6}$ of dimension $2^6=64$. Using the primitive idempotent \eqref{In} and the orthonormal basis representation \eqref{Sn_basis} in terms of the Witt basis the Toffoli gate is represented by
	\begin{align*}
	\lambda_{\operatorname{CCNOT}}&=(I_1\otimes I_2  +I_1\otimes K_2 +I_2\otimes K_1)\otimes \operatorname{id} +K_1\otimes K_2\otimes X_3\\
	&=I_1I_2+I_1K_2+I_2K_1+K_1K_2X_3=1-K_1K_2+K_1K_2X_3\\
	&=1+f_1^\dagger f_1f_2^\dagger f_2(f_3 + f_3^\dagger-1). \\
\lambda_{\operatorname{CSWAP}}&=I_1\otimes \operatorname{id}+K_1\otimes \lambda_{\operatorname{SWAP}} \\
&=I_1+K_1(I_2I_3+K_2K_3+f_2^{\dagger}f_3-f_2f_3^{\dagger})\\
&=f_1 f_1^\dagger + f_1^\dagger f_1(f_2 f_2^\dagger f_3 f_3^\dagger+f_2^\dagger f_2 f_3^\dagger f_3+f_2^{\dagger}f_3-f_2f_3^{\dagger})
\end{align*}
\end{example}


\section{Quantum computing in real Clifford algebras}\label{realGA}	
Accidental isomorphism can be used to formulate intrinsically complex quantum computing in a real framework. We show two ways how to see a qubit in real Clifford algebra $\G_3$, i.e the GA induced by the standard euclidean inner product of signature $(3,0).$ The first approach appears in literature, see  \cite{Lasenby1, dl,cm}, and describes qubit states as even elements in this algebra or equivalently as unite quaternions. The second approach is new and follows from the complex representation of qubits described above. For the other know concepts see \cite{havel,soma,grego}. We also mention how to deal with multiple qubits and multiple qubit gates in the real case.

\subsection{A quaternionic qubit} \label{H_qubit}
The transition from complex  to real framework which appears in literature is based on the well known coincidental isomorphism of Lie algebras $\mathfrak{su}(2)\cong \mathfrak{so}(3)$, or more precisely, on the corresponding isomorphism of Lie groups
\begin{align} \label{SUSpin}
SU(2)\cong Spin(3),
\end{align}
and the isomorphism of these groups with the group of unite quaternions. We can easily describe these isomorphisms explicitly by realizing  Lie algebra $\mathfrak{so}(3)$ as bivectors in Clifford algebra $\G_3$ and Lie group $Spin(3)$ as elements of even grade in $\G_3$. Namely,  in terms of Pauli matrices the Lie algebra isomorphism can be defined by mapping $i_{\mathbb{C}}\sigma_k\mapsto i\sigma_k,$ $k=1,2,3$, where we denote  the usual complex unite by $i_{\mathbb{C}}$ in order to distinguish from pseudoscalar  $i=\sigma_{1}\sigma_2\sigma_3$ in $\G_3$, while $\sigma_k$ on the right hand side is seen as a vector in $\G_3$ satisfying $\sigma_k^2=1$. Consequently, using the Einstein summation convention, we get a Lie group isomorphism \eqref{SUSpin} of a form 
\begin{align} \label{iso}
\begin{pmatrix}
a^0+a^3 i_{\mathbb{C}} & a^2 +a^1 i_{\mathbb{C}} \\
-a^2+a^1 i_{\mathbb{C}} & a^0-a^3 i_{\mathbb{C}}
\end{pmatrix}
\mapsto 
a^0+a^1 \sigma_{2}\sigma_3+ a^2\sigma_{3}\sigma_1+a^3\sigma_{1}\sigma_2=a^0+a^k\sigma_k^*
\end{align}
where the coefficients $a^0,a^1,a^2,a^3\in\mathbb{R}$ and $\sigma_k^*=\sigma_k i = i \sigma_k$ is the duality defined by pseudoscalar $i=\sigma_{1}\sigma_2\sigma_3$. 
Assigning the quaternionic unites to $\sigma_k$, $k=1,2,3$, defines an isomorphism with unite quaternions.
A general state of a qubit is identified with the first column of the matrix on left hand side, thus in the real Clifford algebra $\G_3$ is represented by 
\begin{align*} 
\ket{\psi}=\begin{pmatrix}
a^0+a^3 i_{\mathbb{C}}  \\
-a^2+a^1 i_{\mathbb{C}} 
\end{pmatrix}
\leftrightarrow \psi = a^0 +a^k \sigma_k^*
\end{align*}
In particular, the standard computational basis $(1,0)$ and $(0,1)$ in $\C^2$ is in the real Clifford algebra formulation represented by 
\begin{align} \label{basisH}
\ket{0}=
1 \text{ and }
\ket{1}=
-i\sigma_2=\sigma_{1}\sigma_3,
\end{align}
respectively.
The identification \eqref{iso} also determines explicit formulas for a Hermitian inner product and a representation of Pauli matrices on even subalgebra $\G_3^0,$ namely for $\varphi,\psi\in\G_3^0$ we have  
\begin{align} \label{productH}
\braket{\varphi|\psi}&=[\tilde{\varphi}\psi]_0-[\tilde{\varphi}\psi\sigma_1\sigma_2]_0 i_{\mathbb{C}}, \\
 \label{PauliH}
\hat{\sigma}_k\ket{\psi} &\leftrightarrow \sigma_k\psi\sigma_3.
\end{align}
These formulas can be explained by viewing the unitary group $SU(2)$ as $SO(4)\cap GL(2,\C)$, i.e. as the group of orthogonal transformations with respect to a real scalar product of signature $(4,0)$ commuting with an orthogonal complex structure. A choice of a scalar product and a complex structure on even elements $\G_3^0$ then defines an Hermitian inner product on this space by a standard construction and thus defines an isomorphism \eqref{SUSpin}. In our case, the scalar product is given  by $(\varphi,\psi)=[\tilde{\varphi}\psi]_0$ and the complex structure $J$  is defined by $J\psi=\psi i\sigma_3=\psi \sigma_{1}\sigma_2$. Indeed, for such a choice  the Hermitian product \eqref{productH} is constructed as
\begin{align*} 
\braket{\varphi|\psi}=(\varphi,\psi)-(\varphi,J\psi)i_{\mathbb{C}}.
\end{align*}
The action of Pauli matrices in $\G_3^0$  given by \eqref{PauliH} keep the scalar product invariant and commutes with the complex structure and thus keeps this Hermitian product invariant.
\begin{remark}
This point of view also allows to see the freedom of quaternionic representation of qubits. Namely, choosing a different complex structure or modifying the scalar product on $\G_3^0$ would lead to an isomorphism \eqref{SUSpin} different from \eqref{iso} leading to representations of computational basis, Hermitian product and Pauli matrices different from \eqref{basisH}, \eqref{productH} and \eqref{PauliH}.
\end{remark}

The reality of this qubit representation implies that multiple qubits are represented in a quotient space defined by so called correlator. Namely, representing qubits in the  real geometric algebra $\G_3^+$  the space of $n$-qubits is $\G_3^+\otimes\cdots\otimes\G_3^+$ instead the  tensor power of $n$ copies of $\C^2$. However this  is the complex tensor product according to axions of the quantum mechanics. If we want to have a fully real description, including the real tensor product, we need to identify complex structures $J_k=i\sigma_3^k=\sigma_1^k\sigma_2^k$ (representing the multiplication by complex unite) in all copies. This can be done by introducing the $n$-qubit correlator
\begin{align*}
E_n=\prod_{k=2}^n \frac12 (1-i\sigma_3^1 i\sigma_3^k).
\end{align*} 
Indeed, this element satisfies $E_nJ_k=E_nJ_\ell$ for all $k,\ell = 1,\dots, n$ and thus it defines a quotient space $\G_3^+\otimes\cdots\otimes\G_3^+/E_n$ with a complex structure $J_n=E_nJ_k=E_ni\sigma_3^k$. Multivectors belonging to this space can be regarded as $n$-qubit states. 

\subsection{A real complex qubit}
Another way how to describe states of a qubit by a real algebra is to transfer its complex representation described in section \ref{qubit_section} via the accidental isomorphism of real algebras
\begin{align} \label{isoC2G3}
\C_2 \cong \G_3.
\end{align}
In order to obtain an explicit representation of a qubit we choose a concrete realization of  this isomorphism. Namely, 
in terms of the Witt basis of $\C_2$ and an orthonormal basis $\sigma_k$ of $\R^3$ we consider the isomorphism given by mapping
\begin{align*}
1&\mapsto 1  & i_{\mathbb{C}}&\mapsto \sigma_1\sigma_2\sigma_3\\
f&\mapsto \tfrac12(\sigma_1-\sigma_1\sigma_3) & i_{\mathbb{C}}f&\mapsto \tfrac12(\sigma_2\sigma_3-\sigma_2)\\
f^\dagger&\mapsto \tfrac12(\sigma_1+\sigma_1\sigma_3) & i_{\mathbb{C}}f^\dagger&\mapsto \tfrac12(\sigma_2\sigma_3+\sigma_2)\\
ff^\dagger&\mapsto \tfrac12(1+\sigma_3) & i_{\mathbb{C}}ff^\dagger&\mapsto \tfrac12(\sigma_1\sigma_2+\sigma_1\sigma_2\sigma_3)
\end{align*}
In particular, the complex unite $i_{\mathbb{C}}$ is mapped to trivector $\sigma_1\sigma_2\sigma_3\in\G_3$ and the primitive idempotent $I=f f^\dagger\in \C_2$ is mapped to real idempotent
\begin{align*}
I_\R= \frac12(1+\sigma_3) \in \G_3.
\end{align*}
Using this idempotent the equivalence between the classical description of a qubit as a complex vector and as an element of $\G_3$ based on this isomorphism reads
\begin{align*} 
\ket{\psi}=\begin{pmatrix}
a^0+a^3 i_{\mathbb{C}}  \\
a^1+a^2 i_{\mathbb{C}} 
\end{pmatrix}
\leftrightarrow \psi = (a^0 +a^1\sigma_1 + a^2\sigma_2 + a^3\sigma_1\sigma_2)I_\R,
\end{align*}
where $a^0,a^1,a^2,a^3\in\R.$
Note that, in contrast to the quaternionic representation described in the previous section, a qubit is represented by a multivector in $\G_3$ containing blades of both even and odd grades in this case. In particular, the computational basis is given by
\begin{align}
\ket{0}= I_\R\text{ and } \ket{1}=\sigma_1 I_\R.
\end{align}
The Hermitian inner product on $\G_3$ is given by the transition of the Hermitian product on $\C_2$ given by \eqref{HP} via isomorphism \eqref{isoC2G3}. Looking at the prescription of the isomorphism we see that the Hermitian conjugation of basis elements is mapped to the reverse of the corresponding images in $\G_3.$ Hence we have a particularly simple formula for the Hermitian product in this case, namely for two qubits $\varphi,\psi\in\G_3$ we have
\begin{align} 
\braket{\varphi|\psi}=[\tilde{\varphi}\psi]_0.
\end{align}
Our formula for isomorphism \eqref{isoC2G3} yields also a particularly simple formula for the representation of Pauli matrices, namely
\begin{align}
\hat{\sigma}_k\ket{\psi} &\leftrightarrow \sigma_k\psi.
\end{align}

Although this real representation of a qubit is quite elegant, the representation of multiple qubits is as complicated as in the case of the quaternionic qubit  described in \ref{H_qubit}. Due to its reality we need to use a correlator to identify the multiplication by complex unite in each slot of the tensor product $\G_3\otimes\cdots\otimes\G_3$. Since this is a common feature of all real description of qubits we believe that the right way is to use the complex GA as described in section \ref{QCinC} above.

\end{document}